\newtheorem{thm}{Theorem}
\numberwithin{thm}{section}
\newtheorem{prop}[thm]{Proposition}
\renewcommand{\p@subsection}{}
\renewcommand{\p@subsubsection}{}
\newcommand{\LL}{\mathcal{L}}
\newcommand{\QQ}{\mathbb Q}
\newcommand{\pfrac}[3]{\left(\frac{#1}{#2}\right)^{#3}}
\begin{document}

\title{Operator growth bounds in a cartoon matrix model}

\author{Andrew Lucas}
\email{andrew.j.lucas@colorado.edu}
\affiliation{Department of Physics and Center for Theory of Quantum Matter, University of Colorado, Boulder CO 80309, USA}

\author{Andrew Osborne}
\email{andrew.osborne-1@colorado.edu}
\affiliation{Department of Physics and Center for Theory of Quantum Matter, University of Colorado, Boulder CO 80309, USA}

\begin{abstract}
We study operator growth in a model of $N(N-1)/2$ interacting Majorana fermions, which live on the edges of a complete graph of $N$ vertices. Terms in the Hamiltonian are proportional to the product of $q$ fermions which live on the edges of cycles of length $q$.  This model is a cartoon ``matrix model": the interaction graph mimics that of a single-trace matrix model, which can be holographically dual to quantum gravity.  We prove (non-perturbatively in $1/N$, and without averaging over any ensemble) that the scrambling time of this model is at least of order $\log N$, consistent with the fast scrambling conjecture.  We comment on apparent similarities and differences between operator growth in our ``matrix model" and in the melonic models.
\end{abstract}

\date{\today}

\maketitle
\tableofcontents

\section{Introduction}
Our best hint to the unification of gravity with quantum mechanics arises from the holographic principle \cite{Susskind:1994vu,Maldacena:1997re}.  Some quantum mechanical systems in $D$ spacetime dimensions, with a large $N$ number of degrees of freedom ``per site" (e.g. in a lattice discretization), have been conjectured to describe non-perturbative quantum gravity in $D+1$ dimensions.  An obvious question then arises:  \emph{which} quantum systems do this? 

It has been realized that there is a simple check for whether or not a quantum system might realize quantum gravity holographically.  Roughly speaking, the time it takes to ``scramble" quantum information \cite{Sekino:2008he} scales as $\log N$ in a holographic model.  A computable and specific definition of scrambling is the growth in a suitably defined out-of-time-ordered correlation function (OTOC) \cite{Shenker:2013pqa}: schematically, \begin{equation}
  -  \langle [A(t),B]^2\rangle  \sim \frac{1}{N}\mathrm{e}^{\lambda t}.
\end{equation}
The ``fast scrambling" behavior of holographic models is assured by the finite Lyapunov exponent $\lambda>0$.  A classic theorem in mathematical physics \cite{Lieb1972} then requires that $N$ degrees of freedom interact in a spatially non-local way in order to realize gravity.

Luckily, there are many ways to realize non-local interactions among $N$ degrees of freedom experimentally, including in cavity quantum electrodynamics \cite{Leroux_2010,thompson2020} or trapped ion crystals \cite{Britton_2012}.  There has been a significant amount of recent work to try and understand whether it is possible to realize quantum holography in a quantum system with non-local interactions \cite{Lashkari_2013,Barbon:2012zv,Chew_2017,Chen_2018,Bentsen_2019, Bentsen_2019prl, Bentsen_2019prx, Alavirad_2019,Lewis_Swan_2019,li2020fast,belyansky2020minimal,Yin:2020pjd}. One microscopic model which genuinely realizes exponential operator growth is the Sachdev-Ye-Kitaev (SYK) model \cite{Sachdev:1992fk, Sachdev:2015efa,Maldacena:2016hyu,Kitaev:2017awl}, which shares a universal low energy effective theory \cite{Almheiri:2014cka,Maldacena:2016upp} with two-dimensional Jackiw-Teitelboim gravity \cite{Jackiw:1984je,Teitelboim:1983ux}.  

While there is not an obvious route to realizing the SYK model in a near-term experiment, its analysis is still instructive. At infinite temperature, OTOCs in the SYK model have a particularly simple interpretation as counting the average size of a growing operator \cite{Roberts:2018mnp}, as we will explain in detail later.  In a certain large $N$ limit, a simple form was found for the time-dependent operator size distribution of a growing operator in the SYK model \cite{Roberts:2018mnp}.  It was later proved \cite{Lucas:2019cxr} that the qualitative picture of operator growth in the large $N$ limit, found in \cite{Roberts:2018mnp}, in fact constrains the growth of operators at large but finite $N$ in the SYK model, hence leading to a mathematically rigorous proof of the fast scrambling conjecture in a ``gravitational" model.  Since aspects of operator growth might be accessible in experiments using multiple quantum coherences \cite{Garttner_2017}, it is worthwhile to understand whether the pattern of operator growth is robust from one holographic model to another.

The purpose of this paper is to prove that the same bounds that constrain operator growth in the SYK model also apply to a cartoon single-trace matrix model.  Matrix models are intricately related to quantum gravity: the original conjecture of holography relates the $\mathcal{N}=4$ super-Yang-Mills theory (whose degrees of freedom are $N\times N$ matrices) to quantum gravity in an asymptotically $\mathrm{AdS}_5\times \mathrm{S}^5$ spacetime \cite{Maldacena:1997re}.  Matrix models have also been argued to underlie string theory and M-theory more generally \cite{Banks:1996vh,Ishibashi:1996xs}. Previous work studying chaos in matrix models includes \cite{Stanford:2015owe,Grozdanov:2018atb}.   Note that we are \emph{not} talking about ``random matrix theory", which is also is intricately connected to quantum gravity, albeit in a somewhat different way \cite{Weingarten:1982mg,Kazakov:1985ea,Douglas:1989ve,DiFrancesco:1993cyw,Saad:2019lba}. 

From our perspective, matrix models are interesting because their microscopic Hamiltonians are very different, a priori, from the SYK model.  Relative to the SYK model, the number of terms in the Hamiltonian is parametrically smaller, yet which terms arise in the Hamiltonian are much more constrained.  The $1/N$ expansion of the matrix models is related to the a certain genus\footnote{Diagrams of order $1/N^g$ can only be embedded in a ``planar" way on a two-dimensional Riemann surface of genus $\ge g$.  In this paper, a rather different and more abstract notion of graph genus will control the $1/N$ expansion.} of the corresponding Feynman diagrams \cite{tHooft:1973alw,Brezin:1977sv}, and is qualitatively different from the $1/N$ expansion in the SYK model and other melonic models.  Lastly, matrix models are \emph{not random}, while at least in the SYK model, randomness is essential in order to realize a chaotic fast scrambler (at least rigorously \cite{Lucas:2019cxr}).  Despite these microscopic differences, we will prove that the matrix models are also fast scramblers, and that the operator size distribution is constrained in exactly the same way as it was in the SYK model.  Our result suggests that there may be some universality to operator growth in holographic models. Moreover, our proof also gives an illustrative microscopic example of how a microscopic model without randomness can nonetheless appear to be just as ``frustrated" as the random SYK model; this frustration, together with a large $N$ limit, is crucial in realizing exponential OTOC growth.  We expect that our framework could be generalized to prove operator growth bounds for melonic models without randomness \cite{Gurau:2011xp,Gurau:2016lzk,Witten:2016iux,Klebanov:2016xxf,Gubser:2018yec}.

One important question which we leave unresolved is about finite temperature chaos.  The Lyapunov exponent $\lambda$ is believed to be universal at low temperatures in every holographic model \cite{Maldacena:2015waa}.   It is debated \cite{Susskind:2018tei,Brown:2018kvn,Lucas:2018wsc,Qi:2018bje} how to understand or interpret this universality in the language of operator growth.  A rigorous resolution of this issue, perhaps including a rigorous proof of the conjecture of \cite{Maldacena:2015waa} (which rests on believable but physical assumptions about correlators in chaotic systems), is a challenging question beyond the scope of this paper.

\section{Summary of results}
We consider a model of $N(N-1)/2$ Majorana fermions $\psi_{ij} = -\psi_{ji}$, where $1\le i,j \le N$ are positive integers.  Intuitively speaking, these fermions live on the undirected edges of a (complete) graph where all $N$ vertices are connected.  The Hamiltonians we consider are schematically of the form \begin{equation}
    H = \sum_{i_1,\ldots, i_q=1}^N J_{i_1\cdots i_q} \psi_{i_1i_2}\psi_{i_2i_3}\cdots \psi_{i_qi_1}. \label{eq:fermiontrace}
\end{equation}
where \begin{equation}
    |J_{i_1\cdots i_q}| \le \frac{1}{N^{(q-2)/2}}
\end{equation}
are coefficients which are not necessarily random.  The coefficients $J_{i_1\cdots i_q}$ are \emph{not} necessarily random.  We are interested in studying particular realizations of this model.  

Why is this a ``cartoon matrix model"?   Let us compare to a more conventional matrix model (in zero space and one time dimension) with bosonic $N\times N$ matrix degrees of freedom: \begin{equation}
    H = \mathrm{tr}\left[ \Pi^2 + m^2 \Phi^2 + \lambda \Phi^q  \right],
\end{equation}
where component-wise $[\Phi_{ij},\Pi_{ij}] = \mathrm{i}$; we also assume $\Phi_{ij}=\Phi_{ji}$ and $\Pi_{ij}=\Pi_{ji}$.  We have only allowed single-trace terms in the Hamiltonian. Writing the interaction term in component form, we find \begin{equation}
    \mathrm{tr}\left[  \Phi^q  \right] = \sum_{i_1,\ldots, i_q=1}^N \Phi_{i_1i_2}\Phi_{i_2i_3}\cdots\Phi_{i_qi_1}. \label{eq:bosontrace}
\end{equation}
The interaction structure of (\ref{eq:fermiontrace}) is analogous to (\ref{eq:bosontrace}), which is why we call it a ``cartoon matrix model".   We will see that the allowed interactions in (\ref{eq:fermiontrace}) are quite constraining and have non-trivial consequences on operator growth.


Returning to our model (\ref{eq:fermiontrace}), we study the growth of a ``typical" OTOC at infinite temperature:
\begin{equation}
    C_{ij}(t) = \frac{2}{N(N-1)} \sum_{k<l} \frac{\mathrm{tr}(\lbrace \psi_{ij}(t),\psi_{kl}\rbrace^2)}{\mathrm{tr}(1)}.
\end{equation}
As described in \cite{Roberts:2018mnp}, we can interpret the right hand side of the above equation as an ``average operator size", as measured in an operator size distribution (which we define precisely in the next section).  By studying a (non-Markovian) stochastic process governing the evolution of this size distribution, which we obtain from the many-body Schr\"odinger equation, we can prove that there exists a scrambling time \begin{equation}
    t_{\mathrm{s}} = \kappa \log N, \label{eq:fastscrambling}
\end{equation}
where $\kappa >0$ is a constant which is finite in the $N\rightarrow \infty$ limit, such that \begin{equation}
    C_{ij}(t) \le \frac{c}{N^2} \mathrm{e}^{\lambda_{\mathrm{L}}t}, \;\;\; (0<|t|<t_{\mathrm{s}})
\end{equation}
for some finite $N$-independent constant $c$.  $t_{\mathrm{s}}$ is referred to as the ``scrambling time" for operator growth, and (\ref{eq:fastscrambling}) is one statement of the fast scrambling conjecture \cite{Sekino:2008he}.  We prove that in the matrix models (\ref{eq:fermiontrace}), the exponent \begin{equation}
    \lambda_{\mathrm{L}} < 2\sqrt{2}(q-2)^2. \label{eq:lambdaLboundloose}
\end{equation}
We do not expect this bound on the Lyapunov exponent is tight -- even the $q$ dependence above may not be tight.  Nevertheless, for sensible values of $q$ (e.g. $q=4$), (\ref{eq:lambdaLboundloose}) is sufficient to provide a mathematical proof of the fast scrambling conjecture in the cartoon matrix model.

Although the canonical matrix models of string theory include bosonic degrees of freedom (and thus our results do not immediately apply), we conjecture that many of our qualitative results remain relevant for these bosonic models.  As we will detail later, there are many qualitative similarities (though at least one important difference) between operator growth in our cartoon matrix model and the melonic models such as the SYK model.   It seems plausible that every holographic model must have qualitatively similar operator growth to these models, wherein operators grow in the fastest (and most quantum coherent) way possible, in contrast with random unitary circuit models of operator growth \cite{Bentsen_2019}.

\section{Mathematical preliminaries}

\subsection{Sets and graphs}
We begin by fixing $N\in 4\mathbb{Z}^+$ and constructing the complete (undirected)
graph on  $N$ vertices $\mathrm{K}_N$. Define 
\begin{equation}\label{eqn:defV}
    V = \{1,2,\dots,N\}.
\end{equation}
We define an (undirected) \textbf{graph} as a tuple $(E_G,V_G)$, where the \textbf{edge set} $E_G$ is a set of two element sets drawn from $V$, and the \textbf{vertex set} \begin{equation}
    V_G := \lbrace v \in V : \; v\in e\text{ for some } e\in E_G\rbrace.
\end{equation}
We will often denote $e\in E_G$ with $e\in G$ and $v\in V_G$ with $v\in G$ when clear from context.  Note that for the complete graph $\mathrm{K}_N$, the edge set is given by \begin{equation}
    E_{\mathrm{K}_N} := \lbrace \lbrace i,j\rbrace : 1\le i<j\le N\rbrace,
\end{equation}
while the vertex set is given by $V_{\mathrm{K}_N}=V$.  We will henceforth always refer to an undirected graph $G$ as simply a ``graph", as no directed graphs arise in this paper.
Because we view every graph as a subgraph of $\mathrm{K}_N$ (for some $N$), there is 
a unique edge between any two vertices in all of the following discussion.

We write the \textbf{genus}
of a graph $G$ as 
$g_G$ and define it so that 
\begin{equation}\label{eqn:genusdefn}
 g_G := 1 + |E_G| - |V_G|.
\end{equation}

We define a \textbf{cycle} $C=(e_1,e_2,\ldots,e_l)$ to be an ordered list of distinct edges (i.e. no edge can appear twice) with the property that $e_i = (v_i,v_{i+1 \text{ (mod }l)})$.  Such a cycle is said to have length $q$.   We define 
\begin{equation}\label{eqn:admissible}
    \mathcal{C}_{l}^N = \{ C \subset \mathrm{K}_N\,:\, C \text{ is a cycle of length $l$}\}.
\end{equation}

Define, for an arbitrary graph $X$, and arbitrary vertex $v$, the \textbf{degree}
of $v$ in $X$ as 
\begin{equation}\label{eqn:degdefn}
  \deg_X v = |\{e\in E_X\,:\, v\in e  \}|.
\end{equation}

In what follows, we will be most interested in cycles of a fixed length $q\in2\mathbb{Z}$, corresponding to the number of fermions in each term in our Hamiltonian, as in (\ref{eq:fermiontrace}).  A cycle from $\mathcal{C}_q^N$ is called \textbf{admissible}. 
We will suppress the $N$ superscript on $\mathcal{C}_q$ when context allows.

For two sets $A$ and $B$ define the \textbf{set difference} of $A$ and $B$ as 
\begin{equation}\label{eqn:setminus}
  A\setminus B := \{x \in A\,:\, x\not\in B \}.
\end{equation}
It need not be the case that $B \subset A$ for $A\setminus B$ to be well--defined.
Let us now define the \textbf{symmetric difference}, written as 
\begin{equation}\label{eqn:symmdiff}
  A \triangle B := (A\cup B)\setminus (A\cap B) :=
  \{x \in A\cup B\,:\, x \not\in A\cap B\}.
\end{equation}
Let $X$ be a set operation from $\{\cap,\cup,\triangle,\setminus\,\}$.
We can use  $X$ and two graphs  $G_1$ and $G_2$ to build a third graph
$G_1 X G_2$ in the following way:
let $E_{G_1 X G_2} = E_{G_1} X E_{G_2}$ and let 
\begin{equation}\label{eqn:opdefn}
  V_{G_1 X G_2} := \{ v \in V\,:\, \text{ there exists } e \in E_{G_1 X G_2} \text{ so that } v \in e\}.
\end{equation}
The graphs constructed in this way in general respect the properties of 
the set operations on edge sets, but not on vertex sets. 
For example, it is always the case that 
\begin{equation}\label{eqn:setminusexample} 
  E_{G_1 \setminus G_2}\cap E_{G_2} = \emptyset
\end{equation}
but it is possible that 
\begin{equation}\label{eqn:setminusvertex}
  V_{G_1\setminus G_2}\cap V_{G_2} \neq \emptyset.
\end{equation}
However, since every edge in  $G_1$ is in one of $E_{G_1\cap G_2}$ or $E_{G_1\setminus G_2}$, we must have that 
\begin{equation}
  V_{G_1} = V_{G_1\cap G_2}\cup V_{G_1 \setminus G_2}.
\end{equation}

\subsection{Majorana fermions and operator size}

For some fixed $N \in 4\mathbb{Z}^+$, define 
\begin{equation}\label{eqn:defH}
    \mathcal{H} = (\mathbb{C}^2)^{\otimes \frac{N(N-1)}{4}}.
\end{equation}
On each edge $e\in\mathrm{K}_N$ we define a Majorana fermion 
$\psi_{e} \in \mathcal{B}:=\text{End}(\mathcal H)$.
Each fermion operator is Hermitan.   The fermions obey the anticommutation relation 
\begin{equation}\label{eqn:anticomm}
    \{\psi_{e_1} , \psi_{e_2}\} = 2 \mathbb{I}[e_1 =e_2].
\end{equation}
where $\mathbb{I}[\cdots]$ is the indicator function, which is 1 if its argument is true and 0 otherwise. 
We further associate with each subgraph $G$ of $\mathrm{K}_N$ an operator $\psi_G$ (not necessarily Hermitian):
\begin{equation}
  \psi_G = \prod_{ e \in E_G} \psi_{e} \label{eq:subops}
\end{equation}
where the order of the product is prescribed\footnote{However, it will not be important to us to give a precise prescription.} so as to fix the sign 
of $\psi_G$. 
 We introduce the shorthand $|G)$ to denote $\psi_G$:  the notation is deliberately reminiscent of the bra-ket notation for linear algebra, since we can naturally turn $\mathcal{B}$ into an inner product space: for $A,B\in \mathcal B$, define the inner product
\begin{equation}\label{eqn:prod}
    (A|B):= \frac{1}{2^{\frac{N(N-1)}{4}}}\text{tr}(A^\dagger B).
\end{equation}
Let $\| \cdot\|_2$ be the Frobenius ($\mathrm{L}^2$) norm on $\mathcal{B}$ induced by (\ref{eqn:prod}). 
For an arbitrary $\mathcal M \in \text{End}(\mathcal B)$, define 
\begin{equation}\label{eqn:opnorm}
  \| \mathcal M \| = \sup_{G \in \mathcal B} \frac{\| \mathcal{M} |G)\|_2}{\||G)\|_2} = \sum_{G,G^` \in \mathcal B}
  \frac{|(G| \mathcal M |G')|}{\| |G) \|_2 \| |G^`)  \|_2}
\end{equation}
$\lVert\cdot\rVert$ above denotes the conventional operator norm on $\mathrm{End}(\mathcal{B})$.  

\subsection{The cartoon matrix model}\label{sec:model}
We now formally introduce our cartoon matrix model. Fix some $q \in 2\mathbb{Z}^+$ with $q \leq N$. Define 
\begin{equation}\label{eqn:hamilt}
  H :=  \mathrm{i}^{\frac{q}{2}}\sum_{C\in \mathcal{C}_q^N} J_C \psi_C := \sum_{C\in \mathcal{C}_q^N} H_C
\end{equation}
where $J: \mathcal{C}_q^N \rightarrow [-\sigma,\sigma]$ are real numbers, with \begin{equation}\label{eqn:sigma}
    \sigma = \frac{1}{N^{\frac{q-2}{2}}} 
.\end{equation}
We define time evolution on $\mathcal B$ by the group of one--parameter 
automorphisms generated by $H$ in (\ref{eqn:hamilt}) which will be written as  $\LL$.
That is, we define  $\LL\in\text{End}(\mathcal{B})$ so that 
\begin{equation}\label{eqn:louis}
    \LL = \mathrm{i}[H,\cdot].
\end{equation}
For some subgraph $G\subseteq \mathrm{K}_N$, we write  \begin{equation}
    |G(t)) = \mathrm{e}^{\LL t}|G).
\end{equation}
We also define
\begin{equation}\label{eqn:partl}
    \LL_C = \mathrm{i}[H_C,\cdot].
\end{equation}
We immediately see that \begin{equation}
    \mathcal{L} = \sum_{C\in\mathcal{C}^N_q} \mathcal{L}_C.
\end{equation}

\begin{prop}\label{prop:fermioncommute}
    Let $G$ be a subgraph of $\mathrm{K}_N$ and $C \in \mathcal{C}_q^N$.   Then
    \begin{equation}\label{eqn:symmdiff}
        \LL_C |G) = 2  \alpha J_C \mathbb{I}[|E_{C\cap G}|\in 2\mathbb{Z}+1]|G\triangle C).
    \end{equation}
    with $\alpha \in \mathbb{C}$ obeying $|\alpha|=1$.  
\end{prop}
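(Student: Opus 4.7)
The plan is to reduce the problem to a direct evaluation of $[\psi_C,\psi_G]$ using only the anticommutation relation (\ref{eqn:anticomm}). Unpacking definitions, $\mathcal{L}_C|G) = \mathrm{i}[H_C,\psi_G] = \mathrm{i}^{1+q/2}J_C[\psi_C,\psi_G]$, so the entire statement rests on two facts about the product $\psi_C\psi_G$: (i) $\psi_C$ and $\psi_G$ either commute or anticommute, with the dichotomy controlled by the parity of $|E_{C\cap G}|$; and (ii) when they anticommute, $\psi_C\psi_G = \pm\psi_{C\triangle G}$. The proposition's phase $\alpha$ will absorb every sign and every factor of $\mathrm{i}$, so sign bookkeeping can remain loose throughout.

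For (i), I would move fermions past each other one at a time. A single factor $\psi_e$ (with $e\in E_C$) moved past the $|E_G|$ fermions in $\psi_G$ picks up sign $(-1)^{|E_G|}$ if $e\notin E_G$, and $(-1)^{|E_G|-1}$ if $e\in E_G$, since $\psi_e$ commutes trivially with itself and anticommutes with the remaining factors. Iterating over all $q$ edges of $C$ yields
\begin{equation}
\psi_C\psi_G = (-1)^{q|E_G|-|E_{C\cap G}|}\,\psi_G\psi_C.
\end{equation}
Because $q\in 2\mathbb{Z}^+$ by hypothesis, the sign collapses to $(-1)^{|E_{C\cap G}|}$. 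Hence $[\psi_C,\psi_G]=0$ when $|E_{C\cap G}|$ is even and $[\psi_C,\psi_G]=2\psi_C\psi_G$ when $|E_{C\cap G}|$ is odd, which is precisely the indicator $\mathbb{I}[|E_{C\cap G}|\in 2\mathbb{Z}+1]$ appearing in the claim.

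For (ii), the plan is to reorder the product $\psi_C\psi_G$ so that the $|E_{C\cap G}|$ factors of $\psi_e$ with $e\in E_C\cap E_G$ appear adjacent at the junction of the two products; each such reordering costs only signs via (\ref{eqn:anticomm}). The paired factors then annihilate through $\psi_e^2=\mathbb{I}$, leaving a product over $E_{C\setminus G}\cup E_{G\setminus C}=E_{C\triangle G}$. This equals $\pm\psi_{C\triangle G}$ once its factors are permuted to match the fixed ordering convention built into (\ref{eq:subops}). Combining (i) and (ii) with the prefactor $\mathrm{i}^{1+q/2}J_C$, every $\pm 1$ and every power of $\mathrm{i}$ collapses into a single complex number $\alpha$ with $|\alpha|=1$, proving the claimed identity.

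I do not foresee a genuine obstacle: the proposition's deliberate refusal to pin down $\alpha$ is exactly what makes the argument routine. The one conceptual point worth isolating is that $q$ being even is what converts the a priori $|E_G|$-dependent sign $(-1)^{q|E_G|-|E_{C\cap G}|}$ into a function of $C\cap G$ alone; without that parity input, whether $\mathcal{L}_C|G)$ vanishes would not be determined solely by the overlap of $C$ and $G$.
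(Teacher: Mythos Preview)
Your proposal is correct and follows essentially the same route as the paper: both arguments move single fermions past $\psi_G$ to obtain $\psi_C\psi_G=(-1)^{q|E_G|-|E_{C\cap G}|}\psi_G\psi_C$, invoke $q\in 2\mathbb{Z}$ to reduce the sign to $(-1)^{|E_{C\cap G}|}$, and then annihilate the shared factors to identify the surviving product as $\pm\psi_{C\triangle G}$. Your remark that the parity of $q$ is the crucial input making the vanishing condition depend only on $C\cap G$ is exactly the point the paper's (\ref{eq:psiGpsiC}) isolates.
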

\begin{proof}
    Let $G$ be a subgraph of $\mathrm{K}_N$ and let $e$ be an edge in $\mathrm{K}_N$. 
    Then it is clear from (\ref{eqn:anticomm}) that, if $e\not\in E_G$
    \begin{equation}\label{eqn:outgraph}
        \psi_G \psi_e = (-1)^{|E_G|}\psi_e \psi_G
    \end{equation}
    and if $e \in E_G$, then 
    \begin{equation}\label{eqn:ingraph}
        \psi_G \psi_e = (-1)^{|E_G| - 1} \psi_e \psi_G.
    \end{equation}
    From (\ref{eqn:ingraph}) and (\ref{eqn:outgraph}) it is clear that 
    \begin{equation} \label{eq:psiGpsiC}
        \psi_G \psi_C = (-1)^{|E_G||E_C| - |E_{C\cap G}|} \psi_C \psi_G.
    \end{equation}
    By construction $|E_C|$ is even so we see that 
    $[\psi_G,\psi_C]$ is nonzero if and only if $|E_{C\cap G}|$ is odd.
    If this is the case, then we have that 
    \begin{equation}\label{eqn:commid}
        [\psi_G,\psi_C] = 2 \psi_G \psi_C 
    \end{equation}
     and by repeated use of (\ref{eqn:anticomm}) it can be seen that 
     \begin{equation}\label{eqn:prodsymm}
        \psi_G \psi_C = \beta \psi_{G\triangle C}
    \end{equation}
    for some $\beta \in \{-1,1\}$. Adding now the fact that  
    $\LL_C=\mathrm{i}^{\frac{q}{2}+1} J_C [\psi_C,\cdot] $, the proposition is established.
\end{proof}
\begin{prop}\label{prop:span}
    The space $\mathcal B$ is spanned by elements in the set 
    \begin{equation}\label{eqn:spanset}
        \{|G) \,:\, G \subseteq \mathrm{K}_N\}
    \end{equation}
\end{prop}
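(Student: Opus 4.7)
The plan is a standard dimension-counting plus linear-independence argument, using the inner product on $\mathcal{B}$ that was defined in (\ref{eqn:prod}). The dimension of $\mathcal{B} = \mathrm{End}(\mathcal{H})$ is $\left(2^{N(N-1)/4}\right)^2 = 2^{N(N-1)/2}$, and the number of subgraphs of $\mathrm{K}_N$ is exactly $2^{|E_{\mathrm{K}_N}|} = 2^{N(N-1)/2}$. So it suffices to show that the elements $\{|G) : G \subseteq \mathrm{K}_N\}$ are linearly independent, and for this I would show that they are mutually orthonormal with respect to $(\cdot|\cdot)$.

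First I would establish the key computational ingredient: for any nonempty subgraph $H \subseteq \mathrm{K}_N$, one has $\mathrm{tr}(\psi_H) = 0$. This follows from picking any edge $e \in E_H$ and an edge $e' \notin E_H$ (which exists because $H$ is nonempty and $N \ge 2$ makes $\mathrm{K}_N$ large enough unless $H$ already fills everything; the edge case can be handled separately by noting that $\psi_e$ itself has vanishing trace in the standard Majorana representation) and using the cyclicity of the trace together with (\ref{eqn:anticomm}) to show $\mathrm{tr}(\psi_H) = -\mathrm{tr}(\psi_H)$. Alternatively one can invoke the standard fact that a nontrivial product of Majoranas in the irreducible representation used to build $\mathcal{H}$ is traceless.

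Next, using Proposition \ref{prop:fermioncommute}'s underlying identity (\ref{eqn:prodsymm}), I would compute
\begin{equation}
  (G|G') = \frac{1}{2^{N(N-1)/4}} \mathrm{tr}(\psi_G^\dagger \psi_{G'}) = \frac{\beta}{2^{N(N-1)/4}} \mathrm{tr}(\psi_{G \triangle G'})
\end{equation}
for some $\beta$ with $|\beta|=1$, upon using that $\psi_G^\dagger$ equals $\psi_G$ times a sign. When $G \ne G'$, the symmetric difference $G \triangle G'$ is a nonempty subgraph, so the trace vanishes by the previous step and $(G|G') = 0$. When $G = G'$, the product $\psi_G^\dagger \psi_G$ is the identity (up to a sign which one checks is $+1$ by iterating (\ref{eqn:anticomm})), giving $(G|G) = 1$.

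Since the $|G)$ are orthonormal, they are linearly independent; and since there are exactly $\dim \mathcal{B}$ of them, they span $\mathcal{B}$. The only subtle point is the traceless step for nonempty $H$ — I expect this to be the main obstacle in a strictly self-contained presentation, since it depends on the concrete representation of the Majorana algebra on $\mathcal{H}$ rather than the abstract anticommutation relations alone; in practice one just cites the uniqueness of the irreducible representation of the complex Clifford algebra on $N(N-1)/2$ generators (with $N(N-1)/2$ even by the hypothesis $N \in 4\mathbb{Z}^+$).
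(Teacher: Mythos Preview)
Your proposal is correct and follows essentially the same route as the paper: dimension counting plus orthonormality of the $|G)$ with respect to the inner product (\ref{eqn:prod}), with the key computational step being that any nontrivial Majorana product is traceless, and the reduction of $(G|G')$ to $\mathrm{tr}(\psi_{G\triangle G'})$ via (\ref{eqn:prodsymm}). The only difference is in how the traceless lemma is argued: the paper makes it self-contained by splitting into the cases $|E_H|$ odd (conjugate by some $\psi_{e'}$ with $e'\notin E_H$) and $|E_H|$ even (peel off one edge to reduce to the odd case), whereas you sketch a single argument that leaves the even case and the $H=\mathrm{K}_N$ edge case to an appeal to the standard representation theory of the Clifford algebra. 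Either works; the paper's parity split is what you would write down if you wanted to avoid the external citation you flagged as the ``subtle point.''
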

\begin{proof}
(\ref{eq:subops}) is a natural isomorphism between the basis vectors of $\mathcal{B}$ (i.e. operators on $\mathcal{H}$) and all possible subgraphs (which need not be connected) of $\mathrm{K}_N$.  In other words, there is an isomorphism between elements of $\mathbb{Z}_2^{E_{\mathrm{K}_N}}$ (subgraphs) and basis vectors $|G)$.   Note that when $G=\emptyset$, the corresponding operator is the identity.

It remains to show that the basis vectors are orthogonal.  It is clear that, for some graph $G$, $(G|G) = 1$ from 
  (\ref{eqn:anticomm}) and (\ref{eqn:prod}). Let $G$ be a graph with at least a single edge so that 
  $|E_G|$ is odd. Take some edge $e$ so that  $e\not\in E_G$
  Then by (\ref{eqn:outgraph}) and (\ref{eqn:ingraph}) 
  \begin{equation}\label{eqn:zerotrnin}
    0=\text{tr}(\psi_e \{\psi_G ,\psi_e\}) = 2\text{tr}(\psi_G)
  \end{equation}
  by the cyclic property of traces.
  If $|E_G|$ is even, then choose some edge  $e\in E_G$ and let  $E_G = E_{G_0} \cup \{e\}$.
  Then,  by (\ref{eqn:ingraph}) we have that 
  \begin{equation}
    \{\psi_{G_0},\psi_e \} = 0.
  \end{equation}
  From this it follows that 
  \begin{equation}\label{eqn:treven}
    0 = \text{tr}(\{\psi_{G_0},\psi_e \}) = 2 \alpha \text{tr}(\psi_G)
  \end{equation}
  for some $\alpha \in\mathbb{C}$ with $|\alpha|=1$.
  Hence, by (\ref{eqn:treven}) and (\ref{eqn:zerotrnin}), every nontrivial product of fermions is traceless.
  From (\ref{eqn:prodsymm}), we can see then that 
  \begin{equation}\label{eqn:prodzero}
    (G | G^\prime) = \mathbb{I}[G=G^\prime]
  \end{equation}
  which confirms orthogonality.
\end{proof}

\begin{prop} \label{prop:parity}
    For $v\in V$, let \begin{equation}
     \Psi_v := \prod_{e\in E_{\mathrm{K}_N}: v\in e}  \psi_e.
    \end{equation}
    Then if $N\in4\mathbb{Z}$ and $H$ is given by (\ref{eqn:hamilt}), for all $v,u\in V$:
    \begin{equation}
        [\Psi_v, H]=[\Psi_u,\Psi_v]=0. \label{eq:Psicommute}
    \end{equation}
\end{prop}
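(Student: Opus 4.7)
The plan is to recognize $\Psi_v$ as $\psi_{S_v}$, where $S_v$ denotes the \emph{star subgraph} at $v$, whose edge set is $\{e\in E_{\mathrm{K}_N}:v\in e\}$ and hence has cardinality $\deg_{\mathrm{K}_N}(v)=N-1$. Under this identification both claims in (\ref{eq:Psicommute}) reduce to the pairwise sign identity
\begin{equation}
   \psi_{G_1}\psi_{G_2} = (-1)^{|E_{G_1}|\,|E_{G_2}| - |E_{G_1\cap G_2}|}\,\psi_{G_2}\psi_{G_1},
\end{equation}
which was already derived inside the proof of \cref{prop:fermioncommute} via (\ref{eqn:outgraph})--(\ref{eq:psiGpsiC}); that derivation never uses that the second factor is a cycle and is valid for arbitrary subgraphs $G_1,G_2\subseteq\mathrm{K}_N$. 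Thus the whole argument reduces to a parity check on the exponent in the two relevant cases.

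For $[\Psi_u,\Psi_v]=0$, the case $u=v$ is trivial, so take $u\ne v$. Then $S_u\cap S_v$ consists of the single edge $\{u,v\}$, so $|E_{S_u\cap S_v}|=1$ and $|E_{S_u}|\,|E_{S_v}|=(N-1)^2$. The sign is therefore $(-1)^{(N-1)^2-1}=(-1)^{N(N-2)}$, which is $+1$ for every even $N$, in particular for $N\in 4\mathbb{Z}$.

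For $[\Psi_v,H]=0$ it suffices, by linearity, to show $[\Psi_v,\psi_C]=0$ for every admissible cycle $C\in\mathcal{C}_q^N$. The intersection $S_v\cap C$ is precisely the set of edges of $C$ incident to $v$; because $C$ is a cycle this has size $0$ when $v\notin V_C$ and size $2$ when $v\in V_C$, so $|E_{S_v\cap C}|$ is always even. The other contribution $|E_{S_v}|\,|E_C|=(N-1)q$ is even as well, since $q\in 2\mathbb{Z}^+$. Both terms in the exponent are even, the sign is $+1$, and so $\Psi_v$ commutes with every $\psi_C$ and hence with $H$.

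I expect no genuine obstacle here: the whole content is parity bookkeeping once the correct algebraic identity has been isolated. The only step worth flagging is noticing that the sign formula (\ref{eq:psiGpsiC}) derived in \cref{prop:fermioncommute} for a subgraph--cycle pair applies verbatim to the star--star and star--cycle overlaps encountered here. One might also remark that the hypothesis $N\in 4\mathbb{Z}$ is slightly stronger than what the algebra strictly needs --- only $N$ even appears above --- and enters the proposition statement because it is what makes the Hilbert space (\ref{eqn:defH}) well-defined as a literal tensor product of qubits.
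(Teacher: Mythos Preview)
Your proof is correct and follows essentially the same route as the paper's: both reduce the two commutator claims to the sign identity (\ref{eq:psiGpsiC}) and then check the parity of the exponent using $|E_{S_u\cap S_v}|=1$, $(N-1)^2-1=N(N-2)$, and the evenness of $\deg_C(v)$. One small imprecision worth noting: in this paper a ``cycle'' is only required to have distinct edges, not distinct vertices (cf.\ \cref{fig:segment}, where vertex $i$ has degree $4$ in $C$), so $|E_{S_v\cap C}|=\deg_C(v)$ need not equal $2$ --- but it is always even for a closed trail, which is what the paper's proof records and is all your argument actually needs.
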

\begin{proof}
Let $C\in \mathcal{C}^N_q$.  Then if vertex $v\in C$, $\deg_C(v)\in 2\mathbb{Z}$.  Hence the fermion product $\psi_C$ has an even number of fermions in common with every $\Psi_X$.  Invoking Proposition \ref{prop:fermioncommute} proves that $[H,\Psi_v]=0$.

To prove that $[\Psi_u,\Psi_v]=0$, observe that $\Psi_v$ is a product of $N-1$ fermions.  $N-1$ is odd.  $\Psi_v$ and $\Psi_u$ share exactly one fermion corresponding to edge $\lbrace u,v\rbrace$, so according to (\ref{eq:psiGpsiC}), $\Psi_v\Psi_u = \Psi_u\Psi_v$. 
\end{proof}

Note that $\Psi_v$ is a Hermitian operator and that $\Psi_v^2=1$.  Crudely speaking, we might expect that (\ref{eq:Psicommute}) leads to degeneracies in the spectrum of $H$, where typical eigenvalues have a degeneracy of order $\exp[N]$.  Keeping in mind that the total number of states in the Hilbert space is $\exp[N^2]$, however, we expect that such degeneracies are rather mild.  Regardless, the purpose of this paper is to understand the growth of operators, so we will not study in detail the eigenspectrum of $H$.

\subsection{Time evolution as a quantum walk}
%
For $0\leq s \leq N$ define the projector  $\QQ_s$ so that, for some subgraph
$G$ of $\mathrm{K}_N$,
\begin{equation}\label{eqn:defproj}
  \QQ_s |G) = \mathbb{I}[|E_G| = s]|G).
\end{equation}
Note that by Proposition \ref{prop:span}, we have for a graph $G$ that 
\begin{equation}\label{eqn:projsum}
  \sum_{s = 1}^N \QQ_s |G(t) ) = |G(t)).
\end{equation}
Thus we may define 
\begin{equation}\label{eqn:prob}
  P_s(G,t) = \frac{(G(t)| \QQ_s |G(t))}{(G(t)|G(t))}
\end{equation}
which is a well--defined probability measure. We say that 
$|G(t))$ is size $s $ with probability $P_s(G,t)$. 
We define the partition of $V$
\begin{equation}\label{eqn:partition}
  R_l = 
  \begin{cases}
    \{1\} & l = 0 \\ 
    \{m\in \mathbb{Z} \,:\, (l-1)(q-2)+1 < m \leq l(q-2)+1\} & 0 < l < N' \\
    \{m\in\mathbb{Z}\,:\, (N'-1)(q-2) + 1 \leq m \leq N   \} & l = N'
  \end{cases}
\end{equation}
with 
\begin{equation}\label{eqn:nprime}
 N' = \left\lceil \frac{N-1}{q-2} \right\rceil.
\end{equation}
Using this we define 
\begin{equation}\label{eqn:blockprob}
  P_l(G,t) = \sum_{s\in R_l}P_s(G,t)
\end{equation}
and 
\begin{equation}\label{eqn:blockproj}
  \QQ_l = \sum_{s\in R_l} \QQ_s.
\end{equation}
We say that $G$ is in block $l$ with probability $P_l(G,t)$ at time $t$. 
Loosely speaking, this partition is defined so that
if $\QQ_{l_0}|G) = 1$ for some $l_0$ then $\QQ_l \LL_C |G) = 0 $ unless $|l-l_0| \leq 1 $ for an arbitrary admissible cycle $C$. 
It is this key observation that leads us to define a quantum walk in the spirit of \cite{Lucas:2019cxr}.
As a notational convenience in the following discussion, we will suppress the 
appearance of $G$ in $P_s(t)$ and  $P_l(t)$ and write 
 \begin{equation}\label{eqn:wavefunc}
   P_s(t) = \varphi_s(t)^2 \text{  and  } P_l(t) = \varphi_l(t)^2.
\end{equation}

With the following proposition from \cite{Lucas:2019cxr}, we make the connection to a quantum walk explicit.
\begin{prop}\label{prop:kdefn}
  Let $H$ be the Hamiltonian defined in (\ref{eqn:hamilt}) with some appropriately fixed  $N$ and $q$, and let $0\leq s, s'\leq N$.  
  Finally, let 
  \begin{equation}\label{eqn:curlyk}
    \mathcal{K}_{s's} = \| \QQ_{s'} \mathcal{L} \QQ_s \|.
  \end{equation}
  Then there are functions $K_{s's}: \mathbb{R} \rightarrow [-\mathcal{K}_{ss'},\mathcal{K}_{ss'}]$ 
  so that 
  \begin{equation}\label{eqn:swalk}
    \frac{\mathrm{d}}{\mathrm{d}t}\varphi_{s}(t) = \sum_{s' < s}K_{ss'}(t)\varphi_{s'}(t) - \sum_{s > s'} K_{s's}(t) \varphi_{s'}(t).
  \end{equation}
  Further if 
  \begin{equation}\label{eqn:walksuppos}
    \mathcal K _l = \max\left( 
      \max_{s\in R_l} \sum_{s'\in R_{l+1}}\mathcal{K}_{s's}, 
      \max_{s'\in R_{l+1}} \sum_{s \in R_l} \mathcal{K}_{s's}
      \right),
  \end{equation}
  then there exist functions $K_l: \mathbb{R} \rightarrow [-\mathcal{K}_l,\mathcal{K}_l]$ so that 
  \begin{equation}\label{eqn:auxham}
    \frac{\mathrm{d}}{\mathrm{d}t}\varphi_l(t) = K_{l-1}(t) \varphi_{l-1}(t) - K_{l}(t)\varphi_{l+1}(t)
  \end{equation}
  provided that $K_{-1}(t) = K_{N'}(t) = 0$. 
\end{prop}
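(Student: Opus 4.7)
The plan is to differentiate $\varphi_s(t)^2 = P_s(G,t)$ in time, exploit the anti-Hermiticity of $\LL = \mathrm{i}[H,\cdot]$ on the inner-product space $(\mathcal B,(\cdot|\cdot))$, and bound the resulting off-diagonal matrix elements by Cauchy-Schwarz. Because $H^\dagger=H$ one has $\LL^\dagger=-\LL$, so $\mathrm{e}^{\LL t}$ is unitary and $(G(t)|G(t))=1$ for all $t$. Using that $\QQ_s$ is self-adjoint and $d|G(t))/dt=\LL|G(t))$,
\begin{equation}
2\varphi_s\dot\varphi_s \;=\; \frac{d}{dt}(G(t)|\QQ_s|G(t)) \;=\; (G(t)|[\QQ_s,\LL]|G(t)) \;=\; \sum_{s'\ne s}\bigl(M_{ss'} - M_{s's}\bigr),
\end{equation}
where $M_{ss'}(t) := (G(t)|\QQ_s\LL\QQ_{s'}|G(t))$ and the diagonal $s'=s$ piece cancels identically inside the commutator regardless of whether $\QQ_s\LL\QQ_s$ vanishes. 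Anti-Hermiticity also yields $M_{s's}=-M_{ss'}^*$, so $M_{ss'}-M_{s's}=2\,\mathrm{Re}\,M_{ss'}$.

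Applying Cauchy-Schwarz to $M_{ss'}=(\QQ_s G(t)|\QQ_s\LL\QQ_{s'}|\QQ_{s'}G(t))$ gives $|M_{ss'}| \le \|\QQ_s\LL\QQ_{s'}\|\,\varphi_s\varphi_{s'} = \mathcal K_{ss'}\varphi_s\varphi_{s'}$. For each pair with $s>s'$ I define $K_{ss'}(t):=\mathrm{Re}\,M_{ss'}/(\varphi_s\varphi_{s'})$ when both factors are positive, and take $K_{ss'}(t)\in[-\mathcal K_{ss'},\mathcal K_{ss'}]$ arbitrarily otherwise; the antisymmetry $\mathrm{Re}\,M_{ss'}=-\mathrm{Re}\,M_{s's}$ then produces the signed walk $\dot\varphi_s=\sum_{s'<s}K_{ss'}\varphi_{s'}-\sum_{s'>s}K_{s's}\varphi_{s'}$.

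The block version runs the identical argument with $\QQ_l=\sum_{s\in R_l}\QQ_s$. The structural input is Proposition~\ref{prop:fermioncommute}: each $\LL_C$ shifts $|E_G|$ by $q-2|E_{C\cap G}|$, and since $|E_{C\cap G}|$ must be odd the shift lies in $[-(q-2),q-2]$. The partition (\ref{eqn:partition}) was engineered with block width exactly $q-2$ so that $\QQ_l\LL\QQ_{l'}=0$ whenever $|l-l'|\ge 2$, and the $l'=l$ piece again drops out of the commutator. Expanding $M^l_{l,l+1}:=(G(t)|\QQ_l\LL\QQ_{l+1}|G(t))=\sum_{s\in R_l,s'\in R_{l+1}}M_{ss'}$ and bounding via the Schur test on the nonnegative $|R_l|\times|R_{l+1}|$ matrix $[\mathcal K_{ss'}]$---whose operator norm obeys $\|[\mathcal K_{ss'}]\|\le\sqrt{\|\cdot\|_1\|\cdot\|_\infty}\le\max(\|\cdot\|_1,\|\cdot\|_\infty)=\mathcal K_l$---delivers $|M^l_{l,l+1}|\le\mathcal K_l\,\varphi_l\varphi_{l+1}$. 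Defining $K_l(t)$ exactly as above and using antisymmetry yields $\dot\varphi_l=K_{l-1}\varphi_{l-1}-K_l\varphi_{l+1}$, with $K_{-1}=K_{N'}=0$ trivially since $\varphi_{-1}=\varphi_{N'+1}=0$.

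The main obstacle, to my mind, is the sharpening from the entrywise bound $\sum_{s,s'}\mathcal K_{ss'}\varphi_s\varphi_{s'}$ to the block bound $\mathcal K_l\varphi_l\varphi_{l+1}$: a naive triangle bound would cost a combinatorial factor like $\sqrt{|R_l||R_{l+1}|}$ scaling with $q$, which would blunt the downstream Lyapunov estimate. The Schur test extracts the correct constant by treating $[\mathcal K_{ss'}]$ as the matrix of an operator $\mathbb{R}^{R_{l+1}}\to\mathbb{R}^{R_l}$ and using its $\ell^2\to\ell^2$ operator norm. A minor check is the compatibility of the endpoint blocks $R_0=\{1\}$ and the possibly shorter $R_{N'}$ with the adjacency constraint, but this is elementary once block widths are fixed.
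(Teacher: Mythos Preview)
Your proof is correct and follows essentially the same approach as the paper: differentiate $P_s$, use anti-Hermiticity of $\LL$ to reduce to off-diagonal real parts, and bound by the operator norm $\mathcal K_{s's}$. The only cosmetic difference is in the block step: the paper bounds $\|\QQ_{l+1}\LL\QQ_l\|$ by inserting $\sum_s\QQ_s$, applying $\sqrt{P_sP_{s'}}\le\tfrac12(P_s+P_{s'})$, and summing the probability weights, while you phrase the same estimate as the Schur test on the nonnegative matrix $[\mathcal K_{ss'}]$; both routes land on $\mathcal K_l=\max(\max_s\sum_{s'}\mathcal K_{s's},\max_{s'}\sum_s\mathcal K_{s's})$.
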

\begin{proof}
  Let $G$ be a subgraph of $\mathrm{K}_N$. For $0 < s \leq N$, let  $|\mathcal{G}_s)$
  be an operator of unit norm so that 
  \begin{equation}
    \QQ_s |G(t)) = \sqrt{P_s(t)}|\mathcal{G}_s(t)) 
  \end{equation}
  which must be unique when $P_s(t) \neq 0$. We henceforth suppress the time dependence in all of $|G)$,  $|\mathcal{G}_s)$, and  $P_s$.
  Then, from (\ref{eqn:louis}) and (\ref{eqn:prob}) we have that 
  \begin{equation}\label{eqn:pdot}
    \frac{\mathrm{d}}{\mathrm{d}t}P_s = (G|[\QQ_s,\mathcal{L}]|G).
  \end{equation}
  Moreover, by (\ref{eqn:projsum}), we may rewrite (\ref{eqn:pdot}) as 
  \begin{equation}\label{eqn:pdotsum}
    \frac{\mathrm{d}}{\mathrm{d}t}P_s = \sqrt{P_s}(\mathcal G_s| \mathcal L| G) - \sqrt{P_s}(G|\mathcal L| \mathcal G_s) = 
    \sqrt{P_s}\sum_{s'} \sqrt{P_{s'}} (\mathcal G _s|\mathcal L | \mathcal G_{s'})-
    (\mathcal G_{s'}| \mathcal L |\mathcal G_s).
  \end{equation}
  Next we will define 
  \begin{equation}\label{eqn:defK}
    K_{ss'} := (\mathcal G_{s}| \QQ_s \mathcal L \QQ_{s'}|\mathcal G_{s'}) =
    (\mathcal G_{s}|\mathcal L|\mathcal G_{s'}).
  \end{equation}
  Observing that $\frac{\mathrm{d}}{\mathrm{d}t}(\varphi_s)^2 = 2 \varphi_s \frac{\mathrm{d}}{\mathrm{d}t}\varphi_s$,
  we directly acquire (\ref{eqn:swalk}). 
  
  The required bound on $K_{ss'}$ is trivial by (\ref{eqn:curlyk}).
  The analogue for block size is derived in exactly the same manner, but (\ref{eqn:walksuppos}) must be demonstrated. 
  Define 
  \begin{equation}\label{eqn:lkdefn}
    K_l = (G|\QQ_{l+1}\mathcal L \QQ_l|G) \leq \|\QQ_{l+1} \mathcal L \QQ_l\|
    := \mathcal{K}_l.
  \end{equation}
  By (\ref{eqn:opnorm}), we see that,   for arbitrary $|O)$ and $|O^`) \in \mathcal B$,
  \begin{equation}\label{eqn:walksuppospf}
  \begin{split}
    \mathcal{K}_l = \sup_{|O),|O^`) \in \mathcal B} \frac{(O|\QQ_{l+1} \LL \QQ_l  |O^`)}{\sqrt{(O|O) (O^`|O^`)}} &
    \leq \sup_{|O),|O^`) \in \mathcal B}\sum_{s \in R_l}\sum_{s' \in R_{l+1}}
    \sqrt{P_s(O,t) P_{s'}(O^`,t)} \|\QQ_s \mathcal L \QQ_{s'}\| \\ 
    \leq \sup_{|O),|O^`)}\sum_{s \in R_l}&\sum_{s' \in R_{l+1}} \frac{1}{2}
    (P_{s}(O,t)+P_{s'}(O',t)) \mathcal{K}_{s's}
    \end{split} .
  \end{equation}
(\ref{eqn:walksuppos}) clearly bounds the right most term of (\ref{eqn:walksuppospf}) as $P_s$ is a well-defined probability distribution.
\end{proof}

\subsection{Bound on the Lyapunov exponent}
Qualitatively, by bounding $\mathcal{K}_l$ from (\ref{eqn:walksuppos}), we can ensure that the system has a finite Lyapunov exponent. 
Formally, we have the following theorem from \cite{Lucas:2019cxr}, quoted without proof.
\begin{thm}\label{thm:sykthm}
If there exists $M \in \mathbb{Z}^+$ and $c > 0 $ so that 
\begin{equation}
   \mathcal{K}_l \leq c(l + 1)  \label{eq:thm35c}
\end{equation}
whenever $l\leq M$ then the Lyapunov exponent has the following property
for each $\alpha > 0 $:
\begin{equation}
    \lambda (t) = \frac{1}{t}\log\left( \sum_{l=0}^{N'}l \,P_l(t) \right) \leq 2c(1+\alpha) \label{eq:thm35alpha}
\end{equation}
provided that 
\begin{equation}\label{eqn:smallt}
|t| < \frac{1}{4c(1+e)}\left( \log M - 2 - \log\log \frac{{N'}^3}{2 \alpha} \right).
\end{equation}
\end{thm}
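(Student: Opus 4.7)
My plan is to bound $\sum_l l\,P_l(t)$ via a moment inequality for the quantum walk of Proposition~\ref{prop:kdefn}, together with an escape--probability argument that accounts for the fact that the hypothesis $\mathcal{K}_l \leq c(l+1)$ only controls the walk in the region $l \leq M$. The tridiagonal walk equation (\ref{eqn:auxham}) gives us direct access to derivatives of weighted moments of the amplitudes $\varphi_l(t)$, and the strategy of \cite{Lucas:2019cxr} is to exploit this to produce an analogue of a Gr\"onwall bound on the mean size.

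For the moment bound, I would introduce the weighted quantity $\Lambda(t) := \sum_{l \geq 0} (l+1)\,\varphi_l(t)^2$. Differentiating using (\ref{eqn:auxham}) and shifting the summation index in the ``forward'' hopping term makes the diagonal pieces telescope, yielding $\dot{\Lambda}(t) = 2\sum_l K_l(t)\,\varphi_l(t)\varphi_{l+1}(t)$. Applying $|K_l(t)| \leq \mathcal{K}_l \leq c(l+1)$, the weighted AM--GM inequality $(l+1)|\varphi_l \varphi_{l+1}| \leq \tfrac{1}{2}[(l+1)\varphi_l^2 + (l+1)\varphi_{l+1}^2]$, and the normalization $\sum_l \varphi_l^2 = 1$, one obtains $\dot\Lambda(t) \leq 2c\,\Lambda(t)$, so that Gr\"onwall's inequality gives $\Lambda(t) \leq \Lambda(0)\,e^{2ct}$. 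If the walk were guaranteed to remain within $l \leq M$, this would already deliver $\lambda(t) \leq 2c$ from Markov's inequality and a single logarithm.

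The second, more delicate step is to control the leakage of the true walk into the uncontrolled region $l > M$. I would introduce an auxiliary walk $\tilde\varphi$ obtained by setting $K_l(t) \equiv 0$ for $l \geq M$; to this walk the previous paragraph applies verbatim. The true walk agrees with $\tilde\varphi$ until mass first reaches $l = M$, and the discrepancy can be bounded by applying Markov's inequality to higher moments $\Lambda_k(t) := \sum_l (l+1)^k \varphi_l^2(t)$. Each $\Lambda_k$ obeys its own differential inequality $\dot\Lambda_k \leq f(k)\,c\,\Lambda_k$ for a mild function $f(k)$ (arising from $(l+2)^k - (l+1)^k \leq k(l+2)^{k-1}$), and the escape probability $\sum_{l > M} P_l(t)$ is then bounded by $\Lambda_k(t)/M^k$ for any admissible $k$. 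Optimizing $k$ against $M$ (roughly $k \sim \log M$) produces the iterated logarithm $\log\log(N'^3/(2\alpha))$ appearing in the theorem's threshold on $|t|$.

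The assembly is then a matter of splitting $\sum_l l\,P_l(t)$ into a truncated contribution (bounded by $\Lambda(t) \leq e^{2ct}$) and an escape contribution (bounded by $N' \cdot \sum_{l > M} P_l(t)$), choosing $|t|$ small enough that the latter is at most $\alpha\,e^{2ct}$, and taking a logarithm to read off $\lambda(t) \leq 2c(1+\alpha) + O(1/t)$. The main obstacle is precisely this bootstrap of higher--moment bounds and its optimization --- tracking the constants carefully enough to produce both the iterated logarithm and the prefactor $4c(1+e)$. Since the argument is the one executed in Sections~3--4 of \cite{Lucas:2019cxr} for the SYK walk, I would follow that template, invoking the bounds of Proposition~\ref{prop:kdefn} wherever the SYK estimates appear.
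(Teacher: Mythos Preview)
The paper does not prove Theorem~\ref{thm:sykthm}: it is explicitly introduced as ``the following theorem from \cite{Lucas:2019cxr}, quoted without proof,'' so there is no in-paper argument to compare against. Your sketch correctly reconstructs the strategy of that reference --- a Gr\"onwall bound on the first moment of the tridiagonal walk (\ref{eqn:auxham}), supplemented by higher-moment Markov inequalities to control leakage past $l=M$, with the optimization over the moment order $k$ producing the iterated logarithm in (\ref{eqn:smallt}) --- and since you explicitly propose to follow \cite{Lucas:2019cxr} for the detailed constants, your outline is consistent with what the paper invokes.
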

The main result of this paper is that (\ref{eq:thm35c}) holds for the cartoon matrix model.

\section{Operator growth in the Majorana matrix model}
\subsection{Bounding operator growth rates}
We now state our main result.
\begin{thm}\label{thm:main}
  Let $\epsilon >0$. 
    For the model introduced in Section \ref{sec:model}, 
    \begin{equation}\label{eqn:toprove}
         \mathcal{K}_l \leq \sqrt{2} \frac{8e^\epsilon}{(4 - \epsilon^2)(2-q\epsilon)} (q-2)^2 (l+1)
    \end{equation}
    when 
    \begin{subequations}\label{eq:thmbounds}\begin{align}
        l &< \left\lfloor    \frac{\epsilon \sqrt N}{2 q^2} \right\rfloor, \\
        q\epsilon &\leq 2.
    \end{align}\end{subequations}
\end{thm}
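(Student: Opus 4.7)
The plan is to reduce the bound on $\mathcal{K}_l$ to a combinatorial count of admissible cycles attaching to a fixed operator graph, and then sum carefully over the allowed transitions within block $R_{l+1}$. By Proposition~\ref{prop:fermioncommute}, the nonzero matrix entries satisfy $(G'|\mathcal{L}|G) = 2\alpha J_{G\triangle G'}\,\mathbb{I}[|E_{(G\triangle G')\cap G}|\text{ odd}]$ and require $C := G\triangle G' \in \mathcal{C}_q^N$. For $|E_G|=s\in R_l$ and $|E_{G'}|=s'\in R_{l+1}$, the odd intersection size is forced to $k = (s+q-s')/2$, so only the $s'$ of the form $s+q-2k$ with $k$ an odd positive integer and $k < q/2$ contribute; this gives at most $\sim q/4$ admissible values of $s'$ per $s$.

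Next I would bound each $\mathcal{K}_{s's} = \|\mathcal{Q}_{s'}\mathcal{L}\mathcal{Q}_s\|$ by a Schur/Cauchy--Schwarz estimate
\begin{equation*}
\mathcal{K}_{s's} \le 2\sigma \sqrt{\bigl(\max_{|E_G|=s} N_G^{(k)}\bigr)\bigl(\max_{|E_{G'}|=s'} N_{G'}^{(k)}\bigr)},
\end{equation*}
where $N_G^{(k)} = |\{C\in\mathcal{C}_q^N : |E_{C\cap G}| = k\}|$, and symmetrically $N_{G'}^{(k)}$ counts cycles with $|E_{C\cap G'}| = q-k$. This reduces the operator-norm question to counting admissible length-$q$ cycles with prescribed intersection with a fixed subgraph of $\mathrm{K}_N$.

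The combinatorial heart of the proof is then to enumerate these cycles. Since $C$ is itself a cycle, the intersection $C\cap G$ is a vertex-disjoint union of $m$ paths in $G$ of lengths $l_1,\ldots,l_m$ summing to $k$; the remaining $q-k$ edges of $C$ form $m$ exterior arcs in $K_N\setminus E_G$. For each composition of $k$ into $m$ parts, the number of interior arrangements in $G$ is controlled by $|E_G|=s$ and $|V_G|\le 2s$, while each exterior arc of length $r_i$ contributes at most $\sim N^{r_i-1}$ completions, totaling $\sim N^{q-k-m}$. The dominant $(m,k)=(1,1)$ case gives $N_G^{(1)}\lesssim s N^{q-2}/2$; other $(m,k)$ configurations incur additional factors of order $s/N$ or $|V_G|/N$. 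The hypotheses $l<\lfloor \epsilon\sqrt{N}/(2q^2)\rfloor$ and $q\epsilon \le 2$ then guarantee that these corrections sum as a convergent geometric series, producing the $\epsilon$-dependent constants $e^\epsilon/(4-\epsilon^2)(2-q\epsilon)$ seen in (\ref{eqn:toprove}). Summing $\mathcal{K}_{s's}$ over the $\lesssim q-2$ admissible $s' \in R_{l+1}$ and maximizing over $s \in R_l$ yields the claimed $\mathcal{K}_l \lesssim (q-2)^2(l+1)$.

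The delicate step will be making the combinatorial bookkeeping cancel the potentially dangerous factor $N^{(q-2)/2}$ that appears in $2\sigma\sqrt{N_G^{(k)} N_{G'}^{(k)}}$ against the $\sigma^{-1}$-weighted counts. Achieving this cancellation requires an \emph{asymmetric} analysis of the two cycle counts: $N_G^{(k)}$ admits cycles whose exterior arcs can wander through $\sim N$ fresh vertices outside $V_G$, whereas the dual count $N_{G'}^{(k)}$ pins $C$ to lie mostly inside the small vertex set $V_{G'}$ of cardinality $\le 2s' \ll \sqrt{N}$. Tracking this asymmetry uniformly across the $(m,k,\{l_i\})$ decomposition, and carefully controlling the resulting geometric series with the $\epsilon$-hypotheses, is the technical core of the argument and the step I would expect to be most laborious.
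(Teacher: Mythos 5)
Your general framing — reduce $\mathcal{K}_{s's}$ to a count of admissible cycles meeting a fixed operator graph, decompose cycles into alternating interior/exterior segments, then sum over $s'\in R_{l+1}$ and $s\in R_l$ — is in the right spirit and resembles the paper's segment bookkeeping. But the specific bounding mechanism you propose, an unweighted Schur test
\begin{equation*}
\mathcal{K}_{s's} \le 2\sigma\sqrt{\bigl(\max_{|E_G|=s} N_G^{(k)}\bigr)\bigl(\max_{|E_{G'}|=s'} N_{G'}^{(q-k)}\bigr)},
\end{equation*}
is genuinely too weak and will not give the stated $O(s+s')$ bound. The max row sum and max column sum are achieved at \emph{different} graphs, and the test is not tight. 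Concretely, for $q=4$ take $G$ to be a chain with $s\sim \sqrt{N}$ edges (genus $0$); then $N_G^{(1)}\sim sN^{q-2}$, so $\max_G 2\sigma N_G^{(1)}\sim sN^{(q-2)/2}\sim N^{3/2}$. Separately take $G'$ to be a dense graph such as a balanced complete bipartite graph $K_{a,a}$ with $a^2=s'\sim\sqrt{N}$; then $N_{G'}^{(q-1)}\sim(s')^{q/2}\sim N$, so the corresponding row sum is $O(1)$. The geometric mean is then $\sim N^{3/4}$, which exceeds the required $O(s+s')\sim\sqrt{N}$ by a factor $N^{1/4}$. Your observation that $|V_{G'}|\le 2s'\ll\sqrt{N}$ does not repair this: a dense $G'$ of bounded vertex count can still contain $\sim(s')^{q/2}$ length-$(q-1)$ paths, and $\sigma$ does not provide enough suppression because a dense $G'$ has small vertex set but $many$ edges.

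The missing idea — and the technical heart of the paper's proof — is a \emph{genus-weighted} version of this Schur/Perron--Frobenius bound. The paper works with the symmetric, positive semidefinite matrix $M_{s's}=\QQ_s\mathcal{L}^\intercal\QQ_{s'}\mathcal{L}\QQ_s$ (after an $\epsilon$-regularization to make it irreducible), and applies the Collatz--Wielandt bound with the trial vector $|\phi)=\sum_{G'}N^{g_{G'}/2}|G')$. The dense graphs $G'$ that destroy the unweighted test are precisely the ones with large genus $g_{G'}=1+|E_{G'}|-|V_{G'}|$, so they are penalized by the factor $N^{g_{G'}/2}$ relative to the low-genus chain. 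The combinatorial work (Proposition~\ref{prop:formalbound}) is then organized around counting cycles $C$ at fixed genus change $\delta g = g_{C\triangle G}-g_G$, and shows each unit of $\delta g$ costs a factor $N^{-1/2}$ (times $q^2(s+q)$); the $N^{\delta g/2}$ from the trial vector exactly cancels the $N^{-\delta g/2}$ from the count, leaving a convergent geometric series controlled by the hypotheses in (\ref{eq:thmbounds}). Without this weight, your ``asymmetric analysis'' has no mechanism to suppress the high-genus $G'$, and the bound does not close. So the gap is not a bookkeeping difficulty to be ground through but a missing structural input: you must weight by genus (or some equivalent quantity), which in turn forces the cycle count to be stratified by $\delta g$ rather than by the intersection size $k$ alone.
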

\begin{proof}
Where possible, the method of this proof follows \cite{Lucas:2019cxr}.  We begin with a simple observation:
\begin{prop}
  Define 
  \begin{equation}\label{eqn:Msps}
    M_{s's} = \QQ_s \mathcal L^\intercal \QQ_{s'} \mathcal L \QQ_s.
  \end{equation}
  $M_{s's}$ is symmetric and positive semidefinite. 
  If the maximal right eigenvalue of $M_{s's}$ is $\mu_{s's}$, then 
   \begin{equation}\label{eqn:}
     \mathcal{K}_{s's} = \sqrt{\mu_{s's}}.
  \end{equation}
\end{prop}
\begin{proof}
  Let $|O) \in \mathcal B$ be of unit norm. Then we define 
  \begin{equation}\
    |O') = \QQ_{s'} \LL \QQ_{s} |O)
  \end{equation}
  and we see that 
  \begin{equation}
    \mathcal{K}_{s's} = \sup_{|O')}\sqrt{(O'|O')}.
  \end{equation}
  It follows that 
  \begin{equation}\label{eqn:}
    \mathcal{K}_{s's}^2 = \sup_{|O) \in \mathcal B}(O|M_{s's}|O) = \mu_{s's}
  \end{equation}
  because $M_{s's}$ is symmetric. 
\end{proof}

For each $s \in \mathbb{Z}^+$, define
\begin{equation}\label{eqn:}
  \mathrm{K}_N^s = \{G \subset \mathrm{K}_N \,:\, |E_{G}| = s\}.
\end{equation}

\begin{prop}\label{prop:CW}
  Define  
  \begin{equation}\label{eqn:ansatz}
    |\phi) = \sum_{G' \in \mathrm{K}_N^s} N^{\frac{g_{G'}}{2}} |G') .
  \end{equation}
  Then 
  \begin{equation}\label{eqn:intbound}
    \mu_{s's} \leq \frac{(\phi | M_{s's}|G)}{(\phi|G)}.
  \end{equation}
\end{prop}
\begin{proof}
  Let $\epsilon > 0$ and define 
  \begin{equation}\label{eqn:sgraphs}
    \mathcal{E} = \epsilon \sum_{G,G' \in \mathrm{K}_N^s} |G)(G'|
  \end{equation}
  and
  \begin{equation}\label{eqn:curlMdefn}
    \mathcal{M}_{s's} = M_{s's} + \mathcal{E}_{s^\prime s}.
  \end{equation}
  Denote the maximal eigenvalue of $\mathcal{M}_{s's} $ as  $\nu_{s's}$
  and the maximal eigenvalue of $\mathcal{E}_{s's}$ as $\epsilon_{s's}$. 
  We see that $M_{s's}$ is positive semidefinite and, by construction,  $\mathcal{E}$ is positive definite, so  $\mathcal{M}_{s's}$ is positive definite.  Moreover since $\mathcal{E}$ is irreducible (every entry is strictly positive), so is $\mathcal{M}_{s's}$.  Hence, we may apply the Perron--Frobenius theorem 
  and therefore the Collatz--Weilandt bound \cite{Meyer:2000bk} to bound $\nu_{s's}$:
  \begin{equation}\label{eqn:sep}
  \nu_{s's} \le   \frac{(\phi|\mathcal{M}_{s's}|G)}{(\phi|G)} = 
    \frac{(\phi|M_{s's}|G)}{(\phi|G)} + \frac{(\phi|\mathcal{E}|G)}{(\phi|G)} .
  \end{equation}
The second term above can be crudely bounded:
  \begin{equation}\label{eqn:smplfyE}
    \frac{(\phi|\mathcal{E}|G)}{{(\phi|G)}} = N^{\frac{-g_G}{2}} \epsilon \sum_{G' \in \mathrm{K}_N^s} N^{\frac{g_{G'}}{2}} < \epsilon \times N^{2s} \times N^{s/2}
  \end{equation}
  where we have used the fact that $|\mathrm{K}_N^s| < N^{2s}$ and $g_G < s$.
  Therefore,
  \begin{equation}
      \nu_{s's} \le \frac{(\phi|\mathcal{M}_{s's}|G)}{(\phi|G)} + \epsilon N^{5s/2}.
  \end{equation}
  
  By the triangle inequality, we also know that   \begin{equation}\label{eqn:ineq}
    \mu_{s's} - \epsilon_{s's} \leq \nu_{s's},
  \end{equation}
  and since $\mathcal{E}$ is a rank-1 matrix, it is easy to find its maximal eigenvalue: \begin{equation}
      \epsilon_{s's} = \epsilon |\mathrm{K}_N^s|  < \epsilon N^{2s}.
  \end{equation}
  Therefore, we find that \begin{equation}
      \mu_{s's} < \epsilon N^{2s}\left(1+N^{s/2}\right) + \frac{(\phi|\mathcal{M}_{s's}|G)}{(\phi|G)}.
  \end{equation}
At any finite $N$, we may now take the limit $\epsilon \rightarrow 0$.  Hence we obtain (\ref{eqn:intbound}). 
\end{proof}

In order to bound $\mu_{s's}$, we need only to bound 
\begin{equation}\label{eqn:reworksum}
  \frac{(\phi|M_{s's}|G)}{(\phi|G)} = \sum_{G' \in \mathrm{K}_N^s} N^{\frac{g_{G'} - g_G}{2}} (G'|M_{s's}|G).
\end{equation}
Introducing the shorthand
\begin{equation}\label{eqn:deltag}
  \mathrm{\Delta} g = g_{G'} - g_G
\end{equation}
(which implicitly depends on $G'$), we rewrite (\ref{eqn:reworksum}) as 
\begin{equation}\label{eqn:canbound}
 \frac{(\phi|M_{s's}|G)}{(\phi|G)} =  \sum_{\mathrm{\Delta} g} N^{\frac{\mathrm{\Delta} g}{2}}\sum_{G'\in \mathrm{K}_N^s\,:\,g_{G'} = g_G + \mathrm{\Delta} g}
  (G'|M_{s's}|G).
\end{equation}
It is this inner sum that we will bound by a sufficiently careful counting. Indeed, for every $G' \in \mathrm{K}_N^s$ 
so that  $(G'|M_{s's}|G) \neq 0$, there is a pair of admissible cycles
$C$ and $C'$ so that 
\begin{equation}\label{eqn:nonzero}
  (G'| \QQ_s \LL_{C'} \QQ_{s'} \LL_C \QQ_s|G) \ne 0. 
\end{equation}
Bounding $(G^\prime| M_{s's}|G)$ reduces to counting the number of cycles $C$ and $C^\prime$ which can lead to a given  $\mathrm{\Delta} g$.

Define a \textbf{segment} to be an ordered list of unique edges, so that consecutive
edges share a vertex.  We denote $\mathcal{S}^E$ to be \begin{equation}\label{eqn:segmentdefn}
  \mathcal{S}^E = (e_1,e_2,e_3,\dots,e_k) \text{ with } |e_i\cap e_{i+1}| = 1\text{ for } 1\leq i < k, \text{ and } e_i \ne e_j \text{ if } i\ne j.
\end{equation}
Equivalently, we could specify an ordered list of vertices so that consecutive vertices are connected by unique edges.  We may also list the vertices as 
\begin{equation}\label{eqn:altsegmentdefn}
  \mathcal{S}^V = (v_1, v_2,v_3 ,\dots , v_{k+1}), \text{ where } e_i = (v_i, v_{i+1}).
\end{equation}
We use the superscript $E$ or $V$ to be clear about which perspective is taken.
We say that some segment $\mathcal{S}$ is in some graph  $G$ if every edge (every pair of vertices) is in $E_G$. 
Define a \textbf{path} to be a sequence of segments, $\{\mathcal{S}_i\}_{i=1}^\infty$, so that the last vertex appearing in  $\mathcal{S}_i^V$ is the first vertex appearing in $\mathcal{S}_{i+1}^V$. 

The notion of segments is useful because we can think of breaking up an admissible cycle $C$ into different segments: $C=(\mathcal{S}_1,\mathcal{S}_2,\ldots)$ which alternate between overlapping and not overlapping with $G$: \begin{equation}
    \mathcal{S}_k \subset G \text{ if } k \text{ is odd, } \; \; \mathcal{S}_k \cap G = \emptyset \text{ if } k \text{ is even}.
\end{equation}  
For each admissible cycle $C$,  there is clearly a unique sequence of segments with this property.  We define $\eta(C,G)$ as the total number of segments defined in this way.  Note that $\eta(C,G)\in 2\mathbb{Z}$, and that the choice of $(\mathcal{S}_1,\ldots)$ is ambiguous because of which of the $\frac{1}{2}\eta$ possible segments is chosen to be $\mathcal{S}_1$ (and whether we go around $C$ in one order or the reverse order). 
It is also clear that 
\begin{equation}\label{eqn:etabound}
  h_{C\setminus G} \leq \frac{1}{2}\eta(C,G) \text{ and } h_{C\cap G} \leq \frac{1}{2}\eta(C,G).
\end{equation}
In our explicit combinatoric bound on $(\phi|M_{s^\prime s}|G)$, we will prefer to count the number of ways to arrange segments, instead of counting cycles $C$ directly.  Of course, since the number of distinct sequences of segments corresponding to a given $C$ is simply $\eta(C,G)$, after accounting for this ``overcounting" we can choose to count segments instead of $C$s.
\begin{figure}[t]
    \centering
    \includegraphics[scale=0.8]{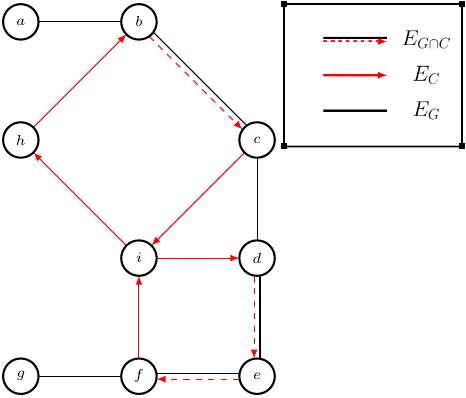}
    \caption{An example of an admissible cycle $C$ and graph $G$ with $\eta(C,G)=4$.  One possible path describing this overlap is $\mathcal{S}_1 = (b,c)$, $\mathcal{S}_2 = (c,i,d)$, $\mathcal{S}_3 = (d,e,f)$, $\mathcal{S}_4 = (f,i,h,b)$.
     }
    \label{fig:segment}
\end{figure}

\begin{prop}\label{prop:formalbound}
  Let $G\subset\mathrm{K}_N$ obey $|E_G| = s$.  Define \begin{equation}
  p = |E_{C\cap G}| = \frac{q - s^\prime + s}{2}.
  \end{equation}
  Choose some $\epsilon \in (0,1)$ and suppose that \begin{subequations}\label{eq:propbounds}\begin{align}
     \epsilon \sqrt{N} &> 2q(\max(s,s^\prime)+q), \\
     \sqrt{N} &> q^2(s+q).
  \end{align}\end{subequations}
  Then we can bound the number of admissible cycles $C$ such that $g_{C\triangle G} - g_G = \delta g$ as  \begin{align} \label{eq:Gammabound}
      \Gamma_{\delta g} &= | \lbrace C \in \mathcal{C}_q^N : C \triangle G \in \mathrm{K}_N^{s^\prime}, g_{G\triangle C} = g_G +\delta g\rbrace |  \leq  
      \frac{4 e^\epsilon}{4 - \epsilon^2}\times s N^{q -p-1}\pfrac{q^2 (s +q )}{\sqrt N}{|\delta g|} \times N^{-\frac{1}{2}\delta g}.
  \end{align} 
\end{prop}
\begin{proof}
  Schematically, we will construct a general admissible cycle $C$ by counting
  the number of choices to be made when constructing a path locally. 
  For this purpose, we prefer to think of choosing vertices over choosing edges. As such, we tend to think of segments in the form of (\ref{eqn:altsegmentdefn}), but we will also use (\ref{eqn:segmentdefn}) when it is convenient to do so. 

 We proceed by the method of generating functions.  Let us define a function $\Gamma(y)$ such that \begin{equation}
    \Gamma(y) = \sum_{\delta g=-q}^q \Gamma_{\delta g} y^{\delta g}.
\end{equation}
To construct $\Gamma(y)$, let us first count how many cycles there are for a fixed value of $\eta(C,G)$, as well as fixed segment lengths $|\mathcal{S}_i^E| = L_i$.  We will sum over the possible values of $L_i$ at the end.

We start with the first segment $\mathcal{S}_1$.  There are $s$ choices of first edge $e$ to choose, and for each given edge $e=(u,v)$, no more than two choices for the ordering of $\mathcal{S}_1^V$:  $(u,v,\ldots)$ or $(v,u,\ldots)$.  Now, let us imagine defining the graph $\tilde G_1 = G\setminus \lbrace e\rbrace$.  Algorithmically, $\tilde G_{i-1} \rightarrow \tilde G_{i}$ will be updated as we march along the path $C$, deleting edges on odd segments $\mathcal{S}_{2k-1}$ and adding them on even segments $\mathcal{S}_{2k}$.  Note that $\tilde G_q = G\triangle C$.  So, after this first step, $\tilde G_1$ corresponds to the graph $G$ with our first edge $e$ removed.  Let us now bound the function $\tilde\Gamma^{1}_1$, corresponding to how many choices we made \emph{thus far}: \begin{equation}
    2\tilde\Gamma^{1}_1 < 2s . \label{eq:2Gamma1}
\end{equation}
The left hand side has a factor of 2 (which we will carry throughout the computation) since we could clearly read a cycle in either direction;  it is easiest to just deal with this double counting at the end.  (\ref{eq:2Gamma1}) is an inequality because if, for example, we pick an edge $e$ which connects to a degree 1 vertex, there is (if $p>1$) only one way to orient the cycle:  the first vertex in $\mathcal{S}_1^V$ must have degree 1.

Note that $g_{\tilde G_1}$ and $g_G$ are, in general, not the same.  However, we will \emph{not} include a factor of $y$ at this stage.  Without loss of generality, suppose $\mathcal{S}_1^V= (v_1,v_2,\ldots)$.  Suppose that $v\in \tilde G_1$;  then at the very last step of the path $C$, our edge will necessarily add 1 to $\delta g$ since it will connect two vertices already in $\tilde G_{q-1}$.  This $+1$ will undo the fact that $g_{\tilde G_1} = g_G - 1$.  Now suppose that $v\notin \tilde G_1$; then the final edge in the path $C$ will not change the genus: $\tilde G_{q-1}=\tilde G_q$.  In either case, we are free to ignore the genus change at both the first and the last step, which always cancel.  We will do so as it is convenient.

Let us now move on the remaining steps in $\mathcal{S}_1^V$, assuming that $L_1>1$.  If $\deg_{\tilde G_1}(v_2) = 1$, then when we delete the next edge, $\tilde G_2$ will have one fewer vertex ($v_2$) and one fewer edge than $\tilde G_1$;  hence $g_{\tilde G_1} = g_{\tilde G_2}$.  Suppose however, that $\deg_{\tilde G_1}(v_2) > 1$; in this case, $g_{\tilde G_1} - 1 = g_{\tilde G_2}$, since $V_{\tilde G_2} = V_{\tilde G_1}$.  Hence after 2 steps, we would find the number of choices \begin{equation}
   2 \tilde\Gamma^{1}_2 = 2\tilde\Gamma^{1}_1 \times \left(1 + \deg_{\tilde G_1}(v_2) y^{-1}\right) \le 2s \left(1+sy^{-1}\right).
\end{equation}
We define the inequality above by the property that every positive coefficient in the Laurent series is no larger on the left hand side than on the right.   Clearly, we can continue this process until we reach edge $L_1$.  Since after each step, the choices we make can be bounded by the same reasoning, summing over all possible pathways forward, we find that the number of choices we could make is \begin{equation}
    \tilde\Gamma^{1}_{L_1} \le s\left(1+sy^{-1}\right)^{L_1-1}.  \label{eq:tildeGamma1}
\end{equation}
Note that (\ref{eq:tildeGamma1}) is valid if $L_1=1$.

Now let us describe the segment $\mathcal{S}_2$.  Let us now define $\mathcal{S}_2^E = (e_1,e_2,\ldots)$.  By definition $e_1\notin \tilde G_{L_1}$, and $\tilde G_{L_1+1}$ will have one more edge.  If $V_{\tilde G_{L_1+1}} \ne V_{\tilde G_{L_1}}$, then we have added both a new edge and a new vertex to get to $\tilde G_{L_1+1}$, and so the net genus has not changed.  There are fewer than $N$ possible vertices to choose from.  If $V_{\tilde G_{L_1+1}} = V_{\tilde G_{L_1}}$, then adding edge $e_1$ increases the genus, since it does not add a new vertex, but adds a new edge.  There are fewer than $s+q$ vertices to choose from in the graph $\tilde G_{L_1}$ (this bound is not tight, but we will not need to adjust this bound at later steps of our ``algorithm").  Hence we conclude that after the first step of $\mathcal{S}^2$, the number of choices (in total) that we have made is \begin{equation}
    \tilde\Gamma^2_1 \le \tilde\Gamma^{1}_{L_1} \times \left(N + (s+q)y\right).
\end{equation}
As above, we can clearly repeat this process $L_2-2$ more times: 
\begin{equation}
    \tilde\Gamma^2_{L_2-1} \le \tilde\Gamma^{1}_{L_1} \times \left(N + (s+q)y\right)^{L_2-1}.
\end{equation}
However, at the last step, we have to be more careful.  If $\eta(C,G)=2$, then there is no freedom to choose the last edge.  If $\eta(C,G)>2$, then we can choose between at most $s+1$ vertices from $G$ to hit.  We conclude that \begin{equation}
    \tilde \Gamma^2_{L_2} \le \tilde\Gamma^{1}_{L_1} \times \left(N + (s+q)y\right)^{L_2-1} \times \left(1 + ((1+s)y-1)\mathbb{I}[\eta(C,G)>2]\right).
\end{equation}

Now, suppose that $\eta(C,G)>2$, so we must keep counting.  The counting for the third segment will be very similar to the first.  We have at most $s$ ways to move after the first step.
Subsequent intermediate steps are constrained as before.  At the last step, we know that we do not delete the last vertex since the next edge in $\mathcal{S}_4$ will include that vertex.  We conclude that \begin{equation}
    \tilde\Gamma^3_{L_3} \le \tilde \Gamma^2_{L_2} \times 2s \left(1+sy^{-1}\right)^{L_3-1} \times y^{-1}.
\end{equation}
During $\mathcal{S}_4$, the counting is essentially the same as $\mathcal{S}_2$: \begin{equation}
    \tilde \Gamma^4_{L_4} \le \tilde\Gamma^{3}_{L_3} \times \left(N + (s+q)y\right)^{L_4-1} \times \left(1 + ((1+s)y-1)\mathbb{I}(\eta(C,G)>4)\right).
\end{equation}
Clearly this accounting continues until we reach segment $\eta(C,G)$, at which case we arrive at our final bound: \begin{align} \label{eq:Gammay}
    \tilde \Gamma(y; L_1,\ldots ,L_\eta) &\le s(2s(1+s))^{\frac{1}{2}\eta-1} \prod_{n=1}^{\eta/2} \left(1+sy^{-1}\right)^{L_{2n-1}-1} \left(N + (s+q)y\right)^{L_{2n}-1} \notag \\
    &=  (2s(1+s))^{\frac{1}{2}\eta-1} \left(1+sy^{-1}\right)^{p-\frac{1}{2}\eta}\left(N + (s+q)y\right)^{q-p-\frac{1}{2}\eta} .
\end{align}
Note that \begin{equation}
   1\le \frac{\eta}{2} \le \min(p,q-p) := \eta_*.
\end{equation}
which comes from the fact that each segment $\mathcal{S}_i$ has at least one edge.  Also note that (\ref{eq:Gammay}) does not depend on $L_i$.  Therefore, \begin{align}
    \tilde\Gamma_\eta(y) &:= \sum_{\substack{L_1+L_3+\cdots = p \\ L_2+L_4+\cdots = q-p}} \tilde\Gamma_\eta(y,L_1,\ldots,L_\eta) \\
    &\le s(2p(q-p)s(1+s))^{\frac{1}{2}\eta-1} \left(1+sy^{-1}\right)^{p-\frac{1}{2}\eta}\left(N + (s+q)y\right)^{q-p-\frac{1}{2}\eta} \notag \\
    &\le s N^{q-p-1} \left(\frac{q^2s^2}{N}\right)^{\frac{1}{2}\eta-1} \left(1+\frac{s+q}{\sqrt{N}}\frac{\sqrt{N}}{y}\right)^{p-\frac{1}{2}\eta}\left(1+\frac{s+q}{\sqrt{N}} \frac{y}{\sqrt{N}}\right)^{q-p-\frac{1}{2}\eta}.
\end{align}
Using straightforward combinatorial bounds, we find that \begin{equation}
    \Gamma_{\eta, \delta g} \le s N^{q-p-1} \left(\frac{q^2s^2}{N}\right)^{\frac{1}{2}\eta-1} \left(1+\frac{s+q}{\sqrt{N}}\right)^{2(q-|\delta g|)} \left(\begin{array}{c} q \\ |\delta g| \end{array}\right)^2 \left(\frac{s+q}{\sqrt{N}}\right)^{|\delta g|} \times N^{-\frac{1}{2}\delta g},
\end{equation}
and evaluating the sum over $\eta$, we find \begin{equation}
    \Gamma_{\delta g} \le \sum_{\eta=2,4,\ldots}^\infty \Gamma_{\eta, \delta g} \le s N^{q-p-1} \frac{N}{N-q^2s^2}\left(1+\frac{s+q}{\sqrt{N}}\right)^{2q} \left(\frac{q^2(s+q)}{\sqrt{N}}\right)^{|\delta g|} \times N^{-\frac{1}{2}\delta g}.
\end{equation}
Assuming that $2q(s+q)<\epsilon\sqrt{N}$, we find \begin{equation}
    \Gamma_{\delta g} \le  \frac{4 \mathrm{e}^\epsilon}{4-\epsilon^2} \times s N^{q-p-1}  \left(\frac{q^2(s+q)}{\sqrt{N}}\right)^{|\delta g|} \times N^{-\frac{1}{2}\delta g}.
\end{equation}
\end{proof}

Propositions \ref{prop:CW} and \ref{prop:formalbound} and (\ref{eqn:boundind}) give the intuition for the remainder of  this proof. Proposition \ref{prop:CW} tells us that, despite the fact that $M_{s's}$ is neither irreducible nor positive, we may apply the Collatz--Weilandt formula to bound $\mu_{s's}$. 
   Proposition \ref{prop:formalbound} gives a bound on the number of cycles which may be chosen to commute through some graph to yield a graph with fixed genus and with this we can acquire a bound on (\ref{eqn:canbound}).  In order to bound $\mu_{s's}$, we
   need only to bound $\Lambda_{\delta g}$, the number of pairs of cycles $C$ and $C'$ so that 
   $(G \triangle C) \triangle C'$ has genus $g_G + \mathrm{\Delta} g$ and $s$ edges: \begin{equation}
       \Lambda_{\delta g}(G) := \lbrace C,C^\prime \in \mathcal{C}_q^N : G\triangle C \in \mathrm{K}_N^{s^\prime}\text{ and } (G\triangle C) \triangle C^\prime \in \mathrm{K}_N^s \text{ and } g_{(G\triangle C) \triangle C^\prime} = g+\Delta g \rbrace. \label{eq:Lambdadef}
   \end{equation} 
      Below, we will suppress the dependence on $G$ (and most importantly on $s$) when denoting $\Lambda_{\mathrm{\Delta} g}$.  Clearly, \begin{equation}
          \Lambda_{\delta g}(G) \le \sum_{n=-q}^q \Gamma_n(s\rightarrow s^\prime) \Gamma_{\mathrm{\Delta} g - n}(s^\prime \rightarrow s).
      \end{equation} 
      In the summand above, we have denoted $n=g_{G\triangle C} -g_G$ and $\mathrm{\Delta} g - n = g_{(G\triangle C)\triangle C^\prime} - g_{G\triangle C}$.  For clarity, we have explicitly emphasized in the notation above that the $\Gamma$ are evaluated with $s$ and $s^\prime$ switched in the two terms. In particular, fixing the values of $s$ and $p$, we can apply proposition \ref{prop:formalbound} to obtain 
   \begin{align}
     \Lambda_{\mathrm{\Delta} g} &\leq  \frac{16 e^{2\epsilon}}{(4 - \epsilon^2)^2} \times s \,s'\, N^{q-2} N^{-\frac{1}{2} \mathrm \Delta g}
       \sum_{n=-q}^q 
       \pfrac{q^2(s+q)}{\sqrt N}{|n|} \pfrac{q^2(s'+q)}{\sqrt N}{|\mathrm{\Delta}g-n|} \notag \\
       &< \frac{8 e^{2\epsilon}}{(4 - \epsilon^2)^2} \times (s+s^\prime)^2 N^{q-2} N^{-\frac{1}{2} \mathrm \Delta g}
       \sum_{n=-q}^q 
       \pfrac{q\epsilon}{2}{|n|+|\mathrm{\Delta}g-n|}. \label{eq:Lambdafinal}
   \end{align}
   where in the second line we have used (\ref{eq:propbounds}), and that since $s,s^\prime>0$,
   $s \, s' \leq \frac{1}{2}(s+s')^2$.
   
      From proposition \ref{prop:fermioncommute} and 
  (\ref{eqn:hamilt}), we see that for arbitrary $G, G' \subset \mathrm{K}_N$ and arbitrary $C,C' \in \mathcal{C}_q^N$, \begin{equation}\label{eqn:boundind}
     (G'|\QQ_s \LL_{C'}^\intercal \QQ_{s'} \LL_{C} \QQ_s |G) \leq 4 J_C J_{C'}
     \leq 4 \sigma^2 .
  \end{equation}
Now combining the definition of $M_{s^\prime s}$, (\ref{eq:Lambdadef}) and (\ref{eqn:boundind}),
   \begin{equation}
       \sum_{G^\prime : g_{G^\prime} = g+\mathrm{\Delta} g} (G^\prime |M_{s^\prime s}|G) \le 4\sigma^2 \Lambda_{\mathrm{\Delta} g}. \label{eq:Lambdasigma}
   \end{equation}
   Using the guess $|\phi)$ from Proposition \ref{prop:CW}, we combine (\ref{eqn:sigma}), (\ref{eq:Lambdafinal}), and (\ref{eq:Lambdasigma}) to obtain  \begin{equation}
       \mu_{s^\prime s} \le \frac{32 e^{2\epsilon}}{(4 - \epsilon^2)^2}  (s+s^\prime)^2 
       \sum_{\mathrm{\Delta}g=-2q}^{2q}  \sum_{n=-q}^q 
       \pfrac{q\epsilon}{2}{|n|+|\mathrm{\Delta}g-n|} \le \frac{32 e^{2\epsilon}}{(4 - \epsilon^2)^2}  (s+s^\prime)^2 \frac{4}{(2-q\epsilon)^2}.
   \end{equation}

Then from proposition \ref{prop:kdefn}, we see that 
\begin{equation}\label{eqn:}
  \mathcal{K}_{s's} \leq \sqrt 2  \frac{8e^\epsilon}{(4 - \epsilon^2)(2-q\epsilon)}    (s+s').
\end{equation}
Then, for $l$ obeying  $l \geq 0$, we use (\ref{eqn:walksuppos}) to find that whenever the bounds in (\ref{eq:thmbounds}) hold, \begin{align}
    \mathcal{K}_l &\le \sqrt{2} \frac{8e^\epsilon}{(4 - \epsilon^2)(2-q\epsilon)}  \sum_{n=2,4,\ldots}^{q-2} \left(2+2l(q-2) + n \right) \notag \\
    &= \sqrt{2} \frac{8e^\epsilon}{(4 - \epsilon^2)(2-q\epsilon)}(q-2)  \left[ 1+l(q-2)  + \frac{q}{4} \right].
\end{align}
Since $4+q \le 4(q-2)$ whenever $q\ge 4$, and when $q=2$ this expression vanishes, we obtain (\ref{eqn:toprove}).
\end{proof}

Using Theorem \ref{thm:sykthm}, for $N$ sufficiently large, we can obtain a constant $\alpha$ arbitrarily close to 0 in (\ref{eq:thm35alpha}), and hence we conclude that  the constant $c$ in (\ref{eq:thm35c}) obeys 
  \begin{equation}
      c = \sqrt 2 \frac{8 e^\epsilon}{(4 - \epsilon^2)(2 - q \epsilon)}(q-2)^2.
  \end{equation}
  For arbitrarily fixed (i.e. $N$--independent) $q$, we may choose $\epsilon = N^{-\frac{1}{4}}$ and take the limit 
  \begin{equation}
      \lim_{N \rightarrow \infty}c =\sqrt 2 (q - 2)^2.
  \end{equation}
  In this case, the hypothesis of Proposition \ref{prop:formalbound} is satisfied and the application of (\ref{eq:thm35alpha}) from theorem \ref{thm:sykthm} straightforwardly produces (\ref{eq:lambdaLboundloose}). 
  On the other hand, as our model is defined, theorem \ref{thm:sykthm} does not necessarily hold if $q$ is $N$--dependent. If, however, we had defined $\sigma$ to include a factor of $q^{-2}$, our bound would be robust in the $N \rightarrow \infty$ limit with scaling as severe as $q \propto N^{1/6 - \beta}$ for arbitrary $\beta > 0$. In this case we would instead have that \begin{equation}
      \lim_{N\rightarrow \infty}c \le \sqrt 2
  \end{equation}
  and in place of (\ref{eq:lambdaLboundloose}) we would have 
  \begin{equation}
      \lambda_{\mathrm L}\le 2 \sqrt 2.
  \end{equation}
  The scaling of $q\lesssim N^{1/6-\beta}$ is necessary so that both bounds in (\ref{eq:thmbounds}) can hold for $l\gg 1$ for large enough $N$.
  
  While we will not present a Feynman diagrammatic analysis of this model here, we briefly note that at leading order in the large $N$ expansion, the class of operators $|G)$ that arise in a growing operator $|e(t))$ (where $e\in E_{\mathrm{K}_N}$ and $e=\lbrace u,v\rbrace$) is quite restrictive: $G$ must correspond to a genus zero graph with two degree one vertices ($u$ and $v$), and all others degree two.  Such graphs correspond to ``lines" (alternatively: ``chains" or ``worms") whose endpoints are $u$ and $v$.  This fact follows in part from Proposition \ref{prop:parity}, which shows that the $\mathbb{Z}_2$-valued degree of every vertex (i.e., is the degree even or odd?) obeys: \begin{equation}
      \deg_G(w; \mathbb{Z}_2) = \deg_H(w; \mathbb{Z}_2),\;\;\; \text{for all } w\in V, \;\;\; \text{if } \;\;\; (H|G(t)) \ne 0.
  \end{equation}   We conjecture that a similar pattern of operator growth holds  in the large $N$ limit of all matrix models, including those with bosonic degrees of freedom.
  
\subsection{Comparison to the Sachdev-Ye-Kitaev model}
We conclude the paper with a few technical (but not rigorous) remarks about our result, in light of earlier work \cite{Lucas:2019cxr} which derived similar bounds for the SYK model.

One obvious difference between the SYK model and the cartoon matrix model is that the former relies heavily on randomness in order to have universal patterns of operator growth, whereas the latter does not appear to.\footnote{Of course, it may be the case that without randomness in the cartoon matrix model, there is severely destructive quantum interference which sends the Lyapunov exponent to zero.  However, we doubt this effect will be present.}  However, the role of randomness may be relatively minor in the broader picture:  other melonic models \cite{Gurau:2011xp,Gurau:2016lzk,Witten:2016iux,Klebanov:2016xxf,Gubser:2018yec} are not random.  

Most likely, the main qualitative difference between SYK operator growth and matrix models more generally (when $q>4$) is that in matrix models, operators need not grow in a fixed sequences of sizes, as they do in SYK \cite{Roberts:2018mnp}: $1\rightarrow q-1\rightarrow 2q-3\rightarrow \cdots$ in the large $N$ limit.  In our matrix model, we were unable to rule out substantial contributions to growth from operators adding  $q-2$, $q-4$, etc. fermions at a time.  To see the consequences of this, it is useful to compare a little more explicitly the $q$-dependence in $\lambda_{\mathrm{L}}$ between our matrix model and SYK.  In the SYK model of $N$ fermions, one finds that \begin{equation}
   \frac{ \mathrm{tr}\left(H^2\right)}{\mathrm{tr}(1)} \sim \frac{N}{q^2}, \;\;\; \text{ implies }  \;\;\; \lambda_{\mathrm{L}} \sim q^0,
\end{equation}
while in the matrix model of $\sim N^2$ fermions, one finds that, using (\ref{eqn:sigma}) and estimating $|\mathcal{C}_q^N| \sim q^{-1}N^q$, \begin{equation}
    \frac{ \mathrm{tr}\left(H^2\right)}{\mathrm{tr}(1)} \sim \frac{N^2}{q} , \;\;\; \text{ implies }  \;\;\; \lambda_{\mathrm{L}} \sim q^2.
\end{equation}
The above equations imply that had we scaled the coupling constant similar to SYK, we would have found $\lambda_{\mathrm{L}} \sim q^{3/2}$ at large $q$.  We can understand this discrepancy as follows: one factor of $q$ arises from the fact that $\mathcal{K}_{s+q-2,s}$ and $\mathcal{L}_{s+q-4,s}$ are comparable in size (in contrast to the SYK model, where the latter is suppressed by powers of $N$ \cite{Lucas:2019cxr}).  A factor of $q^{1/2}$ arises from the fact that there are $\sim s$ couplings to choose from when reducing an operator in size from $s+q-2\rightarrow s$, even at leading order in thhe large $N$ limit.  This was not true in the SYK model: due to the randomness in the model, there were effectively only $\sim s/q$ ways to reduce the size of an operator of size $s$ (that would not, after disorder averaging, lead to subleading effects in $1/N$).  However, since the SYK model is dominated by the same diagrams as other melonic models, we expect that other melonic models are ``frustrated" enough that there are only $\sim s/q$ ways to reduce an operator of size $s$.

A pattern that is shared by our cartoon matrix model and the SYK model is the fact that of all the operators of size $s$, the ones which can grow the fastest are those which themselves grew out of a size $\sim 1$ operator \cite{Lucas:2019cxr}.  In the cartoon matrix model, it is easy to see why ``generic" operators do not grow quickly.  A randomly chosen subset of $s$ edges will consist of a completely disconnected graph until $s\sim \sqrt{N}$; until $s\sim N$, the graph will consist of tiny disconnected fragments of size $\lesssim \log N$ \cite{bollobas}.  The growth rates of such graphs $G$ would scale as $ \lVert \mathbb{Q}_{s+q-2}\mathcal{L}\mathbb{Q}_s|G) \rVert_2 \sim \sqrt{s}$, rather than $s$.  It is only the highly connected graphs (such as the ``chains" discussed above) which can come close to saturating the bound (\ref{eqn:toprove}).   It would be interesting to understand, in complete generality, which quantum many-body systems have this pattern of operator growth. It may be a sensible route to looking for experimentally simulatable models of holographic two-dimensional quantum gravity.

\section*{Acknowledgements}
AL is supported by a Research Fellowship from the Alfred P. Sloan Foundation.

\bibliography{thebib}

\end{document}